\newtheorem{proposition}{Proposition}[section]
\newtheorem{lemma}[proposition]{Lemma}
\newtheorem{corollary}[proposition]{Corollary}
\newtheorem{theorem}[proposition]{Theorem}
\newtheorem{remark}[proposition]{Remark}
\newtheorem {RHP}{Riemann-Hilbert problem}
\newtheorem*{theorem*}{Theorem}
\newtheorem*{assumption*}{Working assumption}
\newtheorem*{MDBP}{Meromorphic $\overline{\partial}$- problem}
\newtheorem*{MRHP}{Model Riemann-Hilbert problem}
\newtheorem*{SDBP}{Smooth $\overline{\partial}$- problem}
\def\res{\mathop{Res}}
\DeclarePairedDelimiter{\norm}{\lVert}{\rVert}
\DeclarePairedDelimiter{\abs}{\lvert}{\rvert}
\newcommand{\lp}{\left(}
\newcommand{\rp}{\right)}
\newcommand{\ls}{\left[}
\newcommand{\rs}{\right]}
\newcommand{\lb}{\left\{}
\newcommand{\rb}{\right\}}
\newcommand{\la}{\left|}
\newcommand{\ra}{\right|}
\newcommand{\de}{\textrm{d}}
\newcommand{\dbar}{\overline{\partial}}
\newcommand{\lird}{\textrm{L}^\infty ( \mathbb{R}^2 )}
\newcommand{\bfm}{\mathbf{m}}
\newcommand{\bfmt}{\tilde{\mathbf{m}}}
\newcommand{\bfe}{\mathbf{e}}
\newcommand{\bbR}{\mathbb{R}}
\newcommand{\bbC}{\mathbb{C}}
\newcommand{\bbJ}{\mathbb{J}}
\newcommand{\mt}{\tilde{m}}
\newcommand{\sigunonum}{\left( \begin{array}{cc} 0 & 1\\ 1 & 0 \end{array} \right)}
\newcommand{\lnorm}
\newcommand{\bfA}{\mathbf{A}}
\newcommand{\ddx}{\frac{\partial}{\partial x}}
\newcommand{\ddu}{\frac{\partial}{\partial u}}
\newcommand{\dda}{\frac{\partial}{\partial a}}
\newcommand{\imag}{\text{Im}}
\newcommand{\ldru}{L^2\lp \bbR, \de a \rp}
\newcommand{\bt}{\tilde{b}}
\newcommand{\sxt}{\lp s;x,t \rp}
\newcommand{\ta}{\tilde{a}}
\newcommand{\xovt}{\frac{x}{t}}
\date{}
\title{Long-time asymptotic analysis of the Korteweg-de Vries equation via the dbar steepest descent method: \\The Soliton region}
\newsavebox\affbox
\author[1]{Pietro Giavedoni\thanks{\href{mailto:Pietro.Giavedoni@univie.ac.at}{Pietro.Giavedoni@univie.ac.at}; \href{mailto:addenaro@gmail.com}{addenaro@gmail.com}.}}
\begin{document}

\maketitle

\begin{abstract}
We address the problem of long-time asymptotics for the solutions of the Korteweg-de Vries equation under low regularity assumptions. We consider decreasing initial data admitting only a finite number of moments. For the so-called ``soliton region'', an improved asymptotic estimate is provided, in comparison with the one in \cite{GT}. Our analysis is based on the dbar steepest descent method proposed by P. Miller and K. T. D. -R. McLaughlin.  
\end{abstract}

\section{Introduction}

Let us consider the initial-boundary value problem for the Korteweg-de Vries (KdV) equation with a decaying initial datum
\begin{align}\label{CP}
\lb\begin{array}{c}
q_{t}(x,t) = 6q(x,t)q_x(x,t) - q_{xxx}(x,t)\\
q\lp x,t=0 \rp = q_0\lp x \rp, \quad q_0\lp x \rp \rightarrow 0 \,\,\, \mbox{as} \,\,\, \abs*{x} \rightarrow \infty.
\end{array}\right.
\end{align}
Existence and uniqueness of real-valued, classical solutions can be proved via the inverse scattering transform, introduced by Green, Gardner, Kruskal and Miura in their seminal work \cite{GGKM}. The long time behavior of these last ones has been extensively investigated in the literature (\cite{ZabKru}\cite{AblNew}\cite{Manakov}\cite{ZakMan}\cite{ShaOnt}\cite{TanKor}\cite{SegAbl}\cite{Deift}). The solutions are known to eventually decompose into a certain number of solitons, travelling to the right, plus a radiation part, propagating to the left. In this paper we wish to consider the so-called \emph{soliton region}, formed by those points of the $(x,t)$-plane satisfying $\tfrac{x}{t}\geq C_0$, for some fixed constant $C_0>0$. In order to detail more about existing results, let us recall that the solutions of (\ref{CP}) are uniquely individuated by the scattering data of the operator 
\begin{align} 
H:= -\frac{\de}{\de x^2} + q_0\lp x \rp.
\end{align}
associated with the initial datum. These last ones consist of a finite number of eigenvalues, $-\kappa_1^2,-\kappa_2^2,\ldots,-\kappa_M^2$, with $0<\kappa_1<\kappa_2<\ldots<\kappa_M$, of the corresponding norming constants $\gamma_j>0$, and of the reflection coefficient $r:\bbR\rightarrow\bbC$. The long time asymptotics of solutions of (\ref{CP}) in the soliton region reads as follows
\begin{align}\label{Nuova_sol}
q\lp x,t \rp = -2\sum_{j=1}^M \frac{\kappa_j^2}{\cosh^2\lp \kappa_j x - 4\kappa_j^3 t - p_j \rp} +\mathcal{E}\lp x,t \rp, \quad\quad t\rightarrow +\infty,\,\,\,\tfrac{x}{t}\geq C_0 .
\end{align} 
Here the phase-shifts are given by
\begin{align}
p_j = \frac{1}{2} \log\ls \frac{\gamma_j^2}{2\kappa_j}\prod_{l=j+1}^M \lp \frac{\kappa_l-\kappa_j}{\kappa_l+\kappa_j} \rp^2 \rs.
\end{align}
The term $\mathcal{E}(x,t)$ in (\ref{Nuova_sol}) is know to be small for large $t$, its magnitude depending on the smoothness and decay properties of $q_0(x)$. This formula was established by Hirota \cite{HirExa}, Tanaka \cite{TanOnt} and Wadati and Toda \cite{WadTod} independently, for vanishing reflection coefficient $r$. The general case was first treated by Tanaka \cite{TanKor} and Shabat (\cite{ShaOnt}).
More recently, Grunert and Teschl proved such asymptotic behavior for initial data with lower regularity \cite{GT}. Their approach relies on the steepest descent analysis of Riemann-Hilbert problem \ref{Cesare} (see next section) via a decomposition of the nonanalytic reflection coefficient $r$ into an analytic approximant and a small rest. In this paper we examine the same Riemann-Hilbert problem via the modern dbar method, introduced by Miller and McLaughlin in \cite{Circle} and \cite{Line}. In particular, we establish a better estimate of $\mathcal{E}\lp x,t \rp$ for a larger class of initial data. Our main result is the following 
\begin{theorem}\label{Hauptsatz}
Let the reflection coefficient $r$ associated to the initial datum $q_0$ belong to $\mathcal{C}^{N+1}\lp \bbR \rp$, for some integer $N\geq 1$. Assume that $r$ and its first $N$ derivatives tend zero at $\pm \infty$. Moreover, let $r^{(N+1)}$ belong to the Wiener algebra on the real line. That is, assume that this last one is the image, via Fourier transform, of some function in $L^1\lp \bbR \rp$. Fix $C_0>0$. Then there exists a constant $C$ such that
\begin{align}
\abs*{\mathcal{E}\lp x,t \rp} \leq C t^{-N- \frac{4}{3}}
\end{align}
for all $t$ sufficiently large and $x\geq C_0 t$.
\end{theorem}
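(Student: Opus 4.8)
The plan is to analyze Riemann-Hilbert problem \ref{Cesare} via the dbar steepest descent method of Miller and McLaughlin, following the standard chain of transformations but tracking the error with higher precision than in \cite{GT}. The setup involves the reflection coefficient $r$ entering the jump matrix through exponentials of the phase $\theta(s;x,t)$, whose stationary-phase structure in the soliton region $x/t \geq C_0$ is crucial: unlike the oscillatory region, here the phase has no real stationary points, so the jump on $\bbR$ decays exponentially and the asymptotics are driven entirely by the $M$ poles at $s = i\kappa_j$ corresponding to the discrete spectrum.

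Let me sketch the steps. First I would perform the standard conjugation/normalization to move the residue conditions at the poles into the correct triangular form and to introduce the appropriate delta-function (or partial transmission coefficient) factor that produces the phase-shifts $p_j$ in \eqref{Nuova_sol}. Second, and this is where the dbar method departs from the classical approach of \cite{GT}, I would extend the reflection coefficient off the real axis into a non-analytic interpolant: rather than splitting $r$ into an analytic approximant plus a small rest, I would build a single extension $R(s)$ of $r$ into the complex plane that agrees with $r$ on $\bbR$ but whose $\dbar$-derivative is controlled. The regularity hypothesis $r \in \mathcal{C}^{N+1}(\bbR)$ with $r^{(N+1)}$ in the Wiener algebra is exactly what lets one produce an extension whose $\dbar R$ vanishes to high order near the contour; using a Taylor-type expansion of $r$ up to order $N$ along the extension, one gains that $\dbar R$ is bounded by the distance to $\bbR$ raised to the power $N$ times the phase factor, which is the mechanism behind the $t^{-N}$ gain.

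Third, I would deform the contour, absorbing the analytic part of the jump into a pure $\dbar$-problem on sectors emanating from the poles and the point at infinity. The resulting mixed RHP-$\dbar$ problem splits, via the standard factorization $m = m^{\dbar} \cdot m^{\mathrm{RHP}}$, into a meromorphic model problem (the Model Riemann-Hilbert problem environment declared in the preamble, solvable explicitly in terms of the solitons) and a pure $\dbar$-problem (the Smooth $\dbar$-problem environment) with no jump. The model problem, having only the pole contributions, reproduces the $M$-soliton sum in \eqref{Nuova_sol} exactly; the error $\mathcal{E}(x,t)$ comes from the $\dbar$-correction together with the exponentially small contribution of the residual jump on $\bbR$.

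Fourth, I would estimate the $\dbar$-problem by recasting it as an integral equation via the solid Cauchy (Cauchy-Pompeiu) operator and showing it is small in operator norm for large $t$. Here the phase $\imag\,\theta$ is strictly signed on the deformation sectors in the soliton region, so $e^{-t\,\imag\,\theta}$ supplies genuine exponential decay in one variable; combined with the $|\imag\, s|^{N}$ bound on $\dbar R$, a two-dimensional integral estimate of the Cauchy-Pompeiu kernel against $\dbar R \cdot e^{-t\,\imag\,\theta}$ yields the algebraic rate $t^{-N-4/3}$. The exponent $4/3$ is the signature of the Airy/cubic phase scaling $\theta \sim s^3$ characteristic of KdV near the relevant critical configuration, so the interplay between the cubic phase and the $N$ powers from regularity gives the combined bound $t^{-N-4/3}$. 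The main obstacle I anticipate is exactly this final $\dbar$-estimate: one must choose the off-axis extension of $r$ and the opening of the sectors compatibly so that the exponential decay of the phase and the polynomial vanishing of $\dbar R$ conspire to give precisely $t^{-N-4/3}$ and not a weaker rate; controlling the Cauchy-Pompeiu operator norm uniformly in $x/t \geq C_0$, and ensuring the constant $C$ is uniform over the whole soliton region rather than blowing up as $x/t$ approaches its lower bound, will require careful tracking of the dependence on $C_0$ in all the phase estimates.
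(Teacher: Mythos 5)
Your outline reproduces the architecture of the paper's argument faithfully up to a point: the Taylor-type non-analytic extension $R$ with $\abs*{\dbar R}\leq C v^N$ as in (\ref{def_R})--(\ref{diseg_R}), the factorization into a meromorphic model problem carrying the solitons plus a pure $\dbar$-problem for the error vector, and the recasting as an integral equation for the solid Cauchy operator with small norm in the soliton region. But there is a genuine gap at the decisive final step. The two-dimensional estimate of the Cauchy--Pompeiu kernel against $\dbar R\cdot e^{t\Phi}$ that you invoke does \emph{not} yield $t^{-N-4/3}$: integrating $b^N e^{-C_0 tb}$ against the $(tb)^{-1/2}$ factor coming from the Gaussian decay in the real direction produces at best $t^{-N-1}$ (and the raw operator-norm bound of Theorem \ref{proposizione_principale} is only $t^{-N+\frac12}$). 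Your attribution of the exponent $4/3$ to ``Airy/cubic phase scaling near the relevant critical configuration'' is inconsistent with your own (correct) earlier observation that in the soliton region the phase has no real stationary points; there is no Airy-type local parametrix here, and no mechanism in your sketch that produces the extra factor $t^{-1/3}$.

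The missing idea — and the paper's main novelty — is to return to the leading term of the error, isolate the oscillatory factor $e^{2ita[4a^2+(\frac{x}{t}-12b^2)]}$ coming from the \emph{imaginary} part of $\Phi$ along each horizontal line $\imag s=b$, and estimate the resulting one-dimensional integral in $a$ by the Van der Corput lemma: since the third $a$-derivative of the phase is bounded below uniformly, one gains a factor $t^{-1/3}$ on top of the $t^{-N-1}$ already available (see (\ref{Tricomi})--(\ref{Bombieri})). This step is also where the Wiener-algebra hypothesis on $r^{(N+1)}$ is actually used: one writes $r^{(N+1)}$ as the Fourier transform of an $\check{r}\in L^1(\bbR)$ and applies Fubini in (\ref{Cicerone}) so that the Van der Corput estimate can be applied uniformly in the Fourier variable. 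In your proposal that hypothesis is misassigned to the construction of the extension $R$, which only needs $r\in\mathcal{C}^{N+1}$. Without the oscillatory-integral step your argument proves the weaker bound $\abs*{\mathcal{E}(x,t)}\leq C t^{-N-1}$, not the claimed $t^{-N-\frac43}$.
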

Notice that these hypotheses are satisfied by all initial data admitting $N+2$ moments (\cite{GT}):
\begin{align}
\int_{-\infty}^{+\infty}\lp 1+\abs*{x}^{N+2} \rp \abs*{q_0\lp x \rp} \de x < \infty.
\end{align}  
Theorem \ref{Hauptsatz} is achieved via a careful treatment of the imaginary part of the phase $\Phi$ defined in (\ref{def_Phi}). After the standard procedure, the jump (\ref{salto}) of the Riemann-Hilbert problem \ref{Cesare} across the real axis is decomposed and displaced partly below and partly above it. Accordingly, $\Phi$ develops a non-vanishing real part, providing a decay of the decomposed jumps towards the identity matrix. The novelty here is that subsequently we reconsider the oscillations originating from the imaginary part of $\Phi$. From their analysis, we extract additional information about the decay of the error term $\mathcal{E}$, corresponding exactly to our improvement of the estimates. To our best knowledge, the idea of this last step is new in the literature. \\
Further advantages of our approach are the following. First of all, our analysis requires less sophisticated technical means, employing basically calculus at an undergraduate level and the Van der Corput lemma. Using them we provide simpler and more explicit expressions for $\mathcal{E}$ (see formulas (\ref{Ciliegio}), (\ref{Saldini}) and the following ones in section \ref{Nuova_ossessione}). These ones can be easily employed for a more detailed, long time asymptotic expansion including higher order corrections. Moreover, they might turn out to be useful for the analysis of analogue Riemann-Hilbert problems beyond the framework of integrability. This a current research interest of ours.


\section{Proof of the result}

Let us fix an integer $N\geq 1$ and a constant $C_0>0$. We assume $x\geq C_0 t$ for the remaining part of the paper and prove theorem \ref{Hauptsatz}. The hypotheses on the reflection coefficient are also understood to hold, without further recalling them. Analogously to \cite{GT}, we produce the solution of KdV corresponding to the initial datum $q_0(x)$ - or equivalently, to the associated scattering data $\kappa_j$, $\gamma_j$ and $r(w)$ - via the following

\begin{RHP}\label{Cesare}
Find a function $\bfm\lp w \rp = ( m_1(w),  m_2(w) )$ meromorphic away from the real axis, with simple poles at $\pm \imath \kappa_1, \pm \imath\kappa_2,\ldots,\pm\imath\kappa_M$, satisfying:
\begin{description}
\item{i\_} "Jump condition ".
For every $w\in \bbR$ one has 
\begin{align}\label{cond_salti}
\bfm_{+}\lp w \rp = \bfm_{-}\lp w \rp V\lp w \rp, 
\end{align}
where
\begin{align}\label{salto}
V\lp w \rp = \lp\begin{array}{cc}
1 - \la r\lp w \rp \ra^2 & -\overline{r\lp w \rp}e^{-t\Phi\lp w \rp}\\
r\lp w \rp e^{t\Phi\lp w \rp} & 1
\end{array}\rp.
\end{align}

\item[ii\_]  "Residue condition" 
\begin{align} 
\res_{w=\imath\kappa_j} \bfm\lp w \rp &= \lim_{w\rightarrow \imath \kappa_j}\bfm\lp w \rp\lp \begin{array}{cc} 
0 & 0 \\
\imath\gamma_j^2e^{t\Phi\lp \imath \kappa_j\rp} & 0
\end{array} \rp\\
 \res_{w = -\imath\kappa_j}\bfm\lp w \rp &= \lim_{w\rightarrow -\imath \kappa_j}\bfm\lp w \rp\lp\begin{array}{cc}
0 & -\imath\gamma_j^2 e^{t\Phi\lp \imath \kappa_j \rp}\\
0 & 0
\end{array}\rp
\end{align}
for $j=1,2,\ldots, M$.

\item[iii\_] "Symmetry condition"
\begin{align} \label{WaitingFor}
\bfm\lp -w \rp = \bfm\lp w \rp\lp  \begin{array}{cc} 0 & 1 \\ 1 & 0 \end{array} \rp
\end{align}

\item[iv\_] "Normalization condition"
\begin{align} \label{Fellini}
\lim_{\abs{w} \rightarrow + \infty}\bfm\lp w \rp = \lp \begin{array}{cc} 1 & 1 \end{array} \rp
\end{align}
\end{description}
Here the phase is given by 
\begin{align} \label{def_Phi}
\Phi\lp w \rp = 8\imath w^3 + 2\imath w \frac{x}{t}.
\end{align}
\end{RHP}

The solution of the system (\ref{CP}) at an arbitrary time $t>0$ is then recovered by the formula
\begin{align} \label{Sava}
q\lp x,t \rp = -2i\frac{\de}{\de x}m_1^{\lp 1 \rp}\lp x,t \rp.
\end{align}
the right-hand side being computed from the expansion
\begin{align} \label{Tisa}
\bfm\lp w;x,t \rp =\bfm^{\lp 0 \rp}\lp x,t \rp + \frac{\bfm^{\lp 1 \rp}\lp x,t \rp}{w} + \mathcal{O}\lp \frac{1}{w^2} \rp, \quad\quad w\rightarrow \infty.
\end{align}

Our proof of theorem \ref{Hauptsatz} consists in subsequent reformulations of the Riemann-Hilbert problem, till obtaining a convenient, equivalent integral equation defined on the plane.

\subsection{Non-holomorphic extensions of the relection coefficient. Reformulation of the jump across the real axis.}

In this section we remove the jump of $\bfm$ across the real axis, exchanging it for some non-analytic behaviour on a strip around this last one. Let us fix the parameter 
\begin{align}
\delta := \min\lb \frac{\kappa_1}{100}, \frac{\sqrt{C_0}}{100} \rb.
\end{align}
We define the following regions 
\begin{align}
\Omega_1 &= \lb w\in \bbC \mbox{ such that } 0\leq \imag\, w < \delta \rb\\
\Omega_2 &= \lb w\in \bbC \mbox{ such that }  \delta\leq \imag\, w < 2\delta\rb\\
\Omega_3 &= \lb w\in \bbC \mbox{ such that }  2\delta \leq \imag\, w \rb
\end{align}
\begin{figure}
\begin{center}
\includegraphics[width= 0.714\textwidth]{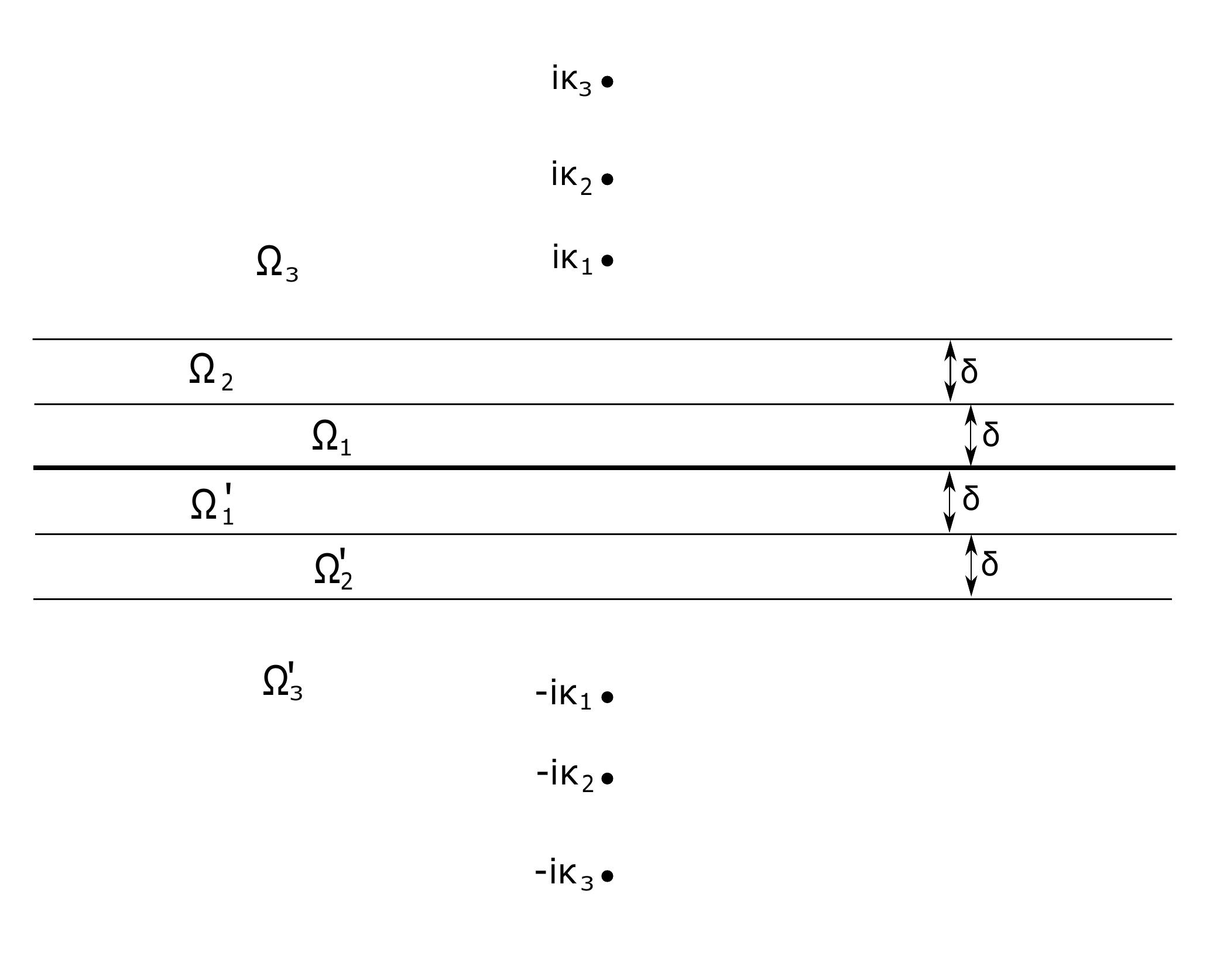}
\caption{\label{figura_striscie}{\it The strips introduced close to the real axis.}}
\end{center}
\end{figure} 
With $\Omega_1^{\prime}, \Omega_2^{\prime}$ and $\Omega_3^{\prime}$ we will indicate the corresponding reflected strips w.r.t. the real axis (see figure \ref{figura_striscie}). Let us also fix the notation 
\begin{align}
w=u+iv, \quad\quad s=a+ib, \quad\quad\quad a,b,u,v \in \bbR.
\end{align}
for the remaining part of the paper. We wish to consider the following non-analytic extension of the reflection coefficient $r$ to the whole upper-half plane:
\begin{align}\label{def_R}
R\lp w \rp = R\lp u+iv \rp := \ls r\lp u \rp + r^{\prime}\lp u \rp\lp iv \rp + \frac{1}{2}r^{\prime\prime}\lp u \rp\lp iv \rp^2 + \ldots + \frac{1}{N!}r^{(N )}\lp u \rp\lp iv \rp^N \rs \cdot \chi\lp \frac{v}{\delta} \rp.
\end{align}
Here $\chi$ is chosen as follows\footnote{The choice of $\chi$ is by no means unique. Any other smooth function with analogue ``cut-off'' properties would do.} 
\begin{align}\label{def_chi}
\chi\lp v \rp = \lb \begin{array}{ll}
1 & 0\leq v < 1\\
\exp{\ls\frac{\lp v-1 \rp^2}{\lp v-1 \rp^2 - 1}\rs} & 1\leq v <2\\
0 & v \geq 2
\end{array}
\right.
\end{align}
Notice that $R(w)$ vanishes on $\Omega_3$. Moreover, there exists a constant $C>0$ such that
\begin{align}\label{diseg_R}
\abs*{\dbar R\lp w \rp} \leq C v^N, \quad\quad \quad w\in \Omega_1\cup\Omega_2.
\end{align}
where 
\begin{align}
\dbar := \frac{1}{2}\lp \ddu + i \frac{\partial}{\partial v} \rp
\end{align}
These are the main properties motivating our choice of such extension. Using (\ref{def_R}), one can decompose the jump matrix $V$ as follows
\begin{align}
V\lp w \rp = A_{low}\lp w \rp A_{upp}\lp w \rp^{-1}, \quad\quad w\in \mathbb{R}.
\end{align}
Here
\begin{align}
A_{upp}\lp w \rp = \lp \begin{array}{cc}
1 & 0 \\
-R\lp w \rp e^{t\Phi\lp w \rp} & 1
\end{array}\rp \quad \quad\quad\quad \imag\lp w \rp \geq 0
\end{align}
and 
\begin{align} 
A_{low}\lp w \rp= \lp\begin{array}{cc}
1 & -R\lp -w \rp e^{-t\Phi\lp w \rp}\\
0 & 1
\end{array}\rp \quad\quad\quad\quad \imag\lp w \rp \leq 0.
\end{align}
Mimicking the classical nonlinear steepest descent method (\cite{Deift},\cite{GT}), we introduce
\begin{align}\label{def_m_tilde}
\bfmt\lp w \rp = \lb\begin{array}{cc}
\bfm\lp w \rp A_{upp}\lp w \rp & \imag\lp w \rp\geq 0\\
\bfm\lp w \rp A_{low}\lp w \rp &  \imag\lp w \rp \leq 0
\end{array}\right.
\end{align}
Riemann-Hilbert problem \ref{Cesare} for $\bfm$ is then equivalent to the following
\begin{MDBP} \label{dbar_problem_per_m_tilde}
Find a two dimensional, vector-valued function $\bfmt = (\mt_1,\mt_2)$ continuous on $\bbC\backslash\lb \pm i\kappa_1,\pm i\kappa_2, \ldots, \pm i \kappa_n \rb$ and differentiable with continuity as a function of two real variables away from the real axis, such that
\begin{description}
\item[\textbf{i.}] ``$\dbar$-condition''
\begin{align}
\dbar \bfmt\lp w \rp = \lb \begin{array}{lr}
\bfmt\lp w \rp\lp \begin{array}{cc} 
0 & 0 \\
-\dbar R\lp w \rp e^{-t\Phi\lp w \rp} & 0
\end{array}\rp & \imag\lp w \rp \geq 0\\
\bfmt\lp w \rp \lp \begin{array}{lr}
0 & \dbar R\lp -w \rp e^{-t\Phi\lp w \rp}\\
0 & 0
\end{array}\rp & \imag\lp w \rp \leq 0
\end{array}
\right. .
\end{align}
In particular, $\bfmt$ is holomorphic on $(\Omega_3 \cup \Omega_3^{\prime}) \backslash \lb \pm i\kappa_1, \pm i\kappa_2,\ldots, \pm i \kappa_n \rb.$
\item[\textbf{ii.}] ``Residue condition''. The vector-valued function $\bfmt$ has simple poles at $\lb \pm i\kappa_1, \pm i \kappa_2,\ldots ,\pm i \kappa_n\rb$, where it satisfies  
\begin{align} 
&\res_{w=\imath\kappa_j} \bfmt\lp w \rp = \lim_{w\rightarrow \imath \kappa_j}\bfmt\lp w \rp\lp \begin{array}{cc} 
0 & 0 \\
\imath\gamma_j^2e^{t\Phi\lp \imath \kappa_j\rp} & 0
\end{array} \rp\\
& \res_{w = -\imath\kappa_j}\bfmt\lp w \rp = \lim_{w\rightarrow -\imath \kappa_j}\bfmt\lp w \rp\lp\begin{array}{cc}
0 & -\imath\gamma_j^2 e^{t\Phi\lp \imath \kappa_j \rp}\\
0 & 0
\end{array}\rp 
\end{align}
for $j=1,2,\ldots, n$.
\item[\textbf{iii.}] ``Symmetry condition''
\begin{align}
\bfmt\lp -w \rp = \bfmt\lp w \rp \lp \begin{array}{cc} 0 & 1 \\ 1 & 0 \end{array} \rp, \quad\quad\quad w\in\bbC.
\end{align}
\item[\textbf{iv.}] ``Normalization condition''
\begin{align}
\lim_{\abs{w}\rightarrow + \infty} \bfmt\lp w \rp = (1,1).
\end{align}
\end{description}
\end{MDBP}

\begin{remark} \label{Remark_Claudia}
The solution $\bfmt$ of the meromorphic $\dbar$-problem needs actually to be differentiable also on the real axis, although possibly not with continuity. This is easily deduced, using (\ref{diseg_R}), from the representation
\begin{align}
\bfmt\lp w \rp = \frac{1}{2\pi i}\int_{\partial R} \frac{\bfmt\lp s \rp}{s-w} \de s - \frac{1}{\pi}\iint_{R}\frac{\dbar \bfmt \lp s \rp}{s-w} \de A\lp s \rp.
\end{align}
This is nothing else than the generalization of the Cauchy integral formula for smooth functions \cite{HorLin}. Here $R$ is understood to be a small, compact rectangle containing the point $w$, which for our purposes is chosen on the real line.
\end{remark}

\subsection{The model Riemann-Hilbert problem}

Our next goal is to remove the poles from vector $\bfmt$. To this purpose, we introduce in this section a model Riemann-Hilbert problem. An explicit expression of its solution won't be necessary for our analysis, but only some of its elementary properties concerning regularity and asymptotic behavior. These last ones are provided by proposition \ref{Roseto}.  

\begin{MRHP} 
Find a two times two matrix valued meromorphic function $M$  whose only poles are simple and lie in $\pm \imath \kappa_1,\pm\imath \kappa_2,\ldots,\pm \imath \kappa_M$, satisfying:
\begin{description}
\item[ii\_]  "Residue condition" 
\begin{align} \label{Resuno}
&\res_{w=\imath\kappa_j} M\lp w \rp = \lim_{w\rightarrow \imath \kappa_j}M\lp w \rp\lp \begin{array}{cc} 
0 & 0 \\
\imath\gamma_j^2e^{t\Phi\lp \imath \kappa_j\rp} & 0
\end{array} \rp\\
& \res_{w = -\imath\kappa_j}M\lp w \rp = \lim_{w\rightarrow -\imath \kappa_j}M\lp w \rp\lp\begin{array}{cc}
0 & -\imath\gamma_j^2 e^{t\Phi\lp \imath \kappa_j \rp}\\
0 & 0
\end{array}\rp \label{Resdue}
\end{align}
for $j=1,2,\ldots, M$.
\item[iii\_] "Symmetry condition"
\begin{align} \label{simm_M}
M\lp -w \rp = M\lp w \rp \sigunonum
\end{align}
\item[iv\_] "Normalization condition"
\begin{align}\label{norm_matricial}
M\lp w \rp = \lp\begin{array}{cc}
1 & 1 \\ - \imath w & \imath w
\end{array}\rp \lp \mbox{Id} + \frac{\imath H}{w}\lp \begin{array}{cc} 1 & 0 \\ 0 & -1 \end{array} \rp  + \mathcal{O}\lp \frac{1}{w^2} \rp\rp
\end{align}
for some constant $H$ possibly depending on $x$ and $t$.
\end{description}
\end{MRHP}

Let us remark that one cannot normalize $M$ imposing it to approach the identity matrix at infinity. No solution would then exist for a set of exceptional points $(x,t)$ which accumulate in the neighborhood of the peaks of the solitons as $t\rightarrow \infty$ (see \cite{BeaDei}, Chap. 38 for more details about this kind of issues). Normalization (\ref{norm_matricial}) will do for our purposes. All the information we need about the Model Riemann-Hilbert problem is contained in the following
\begin{proposition}\label{Roseto}
There exists a unique solution to the Model Riemann-Hilbert problem. This last one has the form
\begin{align}
M\lp w \rp = \ls\begin{array}{cc}
f\lp w \rp & f\lp -w \rp \\
g\lp w \rp & g\lp -w \rp
\end{array}\rs
\end{align}
where 
\begin{align} \label{Aneins}
f\lp w \rp = 1 + \frac{iA_1\lp x,t \rp}{w-i\kappa_1} + \frac{iA_2\lp x,t \rp}{w - i\kappa_2} + \ldots + \frac{iA_M\lp x,t \rp}{w-i\kappa_M}
\end{align}
and 
\begin{align}\label{Anzwei}
g\lp w \rp = -iw + H\lp x,t \rp + \frac{iB_1\lp x,t \rp}{w-i\kappa_1} + \frac{iB_2\lp x,t \rp}{w-i\kappa_2} + \ldots +\frac{iB_M\lp x,t \rp}{w-i\kappa_M}.
\end{align}
The functions $A_j\lp x,t \rp$ and $B\lp x,t \rp$ and their derivative w.r.t. $x$ are bounded in the whole $(x,t)$-plane. The constant $H$ is determined by conditions $\mathbf{ii}$ and $\mathbf{iii}$, and has the following asymptotic behaviour
\begin{align}\label{clas}
\ddx H\lp x,t \rp = - \sum_{j=1}^{M} \frac{k_j^2}{\cosh^2\lp \kappa_j x - 4\kappa_j^3 t - p_j \rp} + \mathcal{O}\lp e^{-Ct} \rp,\quad\quad t\rightarrow +\infty.
\end{align}
Here $C$ is a positive constant and 
\begin{align}\label{sic}
p_j := \frac{1}{2}\log\ls \frac{\gamma_j^2}{2\kappa_j}\prod_{l=j+1}^{M}\lp \frac{\kappa_l - \kappa_j}{\kappa_l + \kappa_j} \rp^2 \rs,\quad\quad\quad j=1,2,\ldots ,M.
\end{align}
\end{proposition}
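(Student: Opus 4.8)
The plan is to treat the Model Riemann--Hilbert problem as a purely rational problem, solve it by linear algebra, and then extract the large-$t$ behaviour of $H$ from the resulting finite-dimensional system.

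\textbf{Existence, uniqueness and the explicit form.} First I would use the symmetry condition (\ref{simm_M}) to express the second column of $M$ in terms of the first: writing $f=M_{11}$, $g=M_{21}$, condition (\ref{simm_M}) forces $M_{12}(w)=f(-w)$ and $M_{22}(w)=g(-w)$, so the problem reduces to finding the pair $(f,g)$. The residue conditions (\ref{Resuno})--(\ref{Resdue}) show that $f,g$ can carry only simple poles at the upper points $i\kappa_j$, while the normalization (\ref{norm_matricial}) prescribes $f(w)\to 1$ and $g(w)=-iw+H+o(1)$ at infinity. A meromorphic function with these poles and this growth is rational by Liouville, which yields exactly the partial-fraction ansatz (\ref{Aneins})--(\ref{Anzwei}). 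Substituting it into (\ref{Resuno}) turns the residue conditions into the algebraic relations $iA_j=i\gamma_j^2 e^{t\Phi(i\kappa_j)}f(-i\kappa_j)$ and $iB_j=i\gamma_j^2 e^{t\Phi(i\kappa_j)}g(-i\kappa_j)$, and matching the $1/w$--coefficient of $f$ in (\ref{norm_matricial}) gives $H=\sum_j A_j$.

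\textbf{Solvability and boundedness.} Setting $c_j:=\gamma_j^2 e^{t\Phi(i\kappa_j)}$ and using $\Phi(i\kappa_j)=8\kappa_j^3-2\kappa_j x/t$, the relation for $f$ becomes the real symmetric system $\sum_l\big(\tfrac{1}{c_j}\delta_{jl}+\tfrac{1}{\kappa_j+\kappa_l}\big)A_l=1$. Its coefficient matrix is the sum of the Cauchy matrix $\big(\tfrac{1}{\kappa_j+\kappa_l}\big)_{j,l}$, which is positive definite since the $\kappa_j>0$ are distinct, and the nonnegative diagonal $\mathrm{diag}(1/c_j)$ (with $c_j>0$ because $\gamma_j>0$). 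Hence the matrix is positive definite with smallest eigenvalue bounded below, uniformly in $(x,t)$, by that of the Cauchy matrix; this gives uniqueness together with a uniform bound on the $A_j$. Differentiating in $x$, using $\partial_x c_j=-2\kappa_j c_j$ and $A_j/c_j=1-\sum_l A_l/(\kappa_j+\kappa_l)$ (bounded), bounds $\partial_x A_j$; the same argument applies to the $B_j$ (same matrix, inhomogeneity built from the bounded $H$), and reassembling $f,g$ and their reflections produces the full solution $M$.

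\textbf{Large-$t$ asymptotics of $H$ (the main step).} I would write $c_j=\gamma_j^2 e^{-2\kappa_j\xi_j}$ with $\xi_j:=x-4\kappa_j^2 t$. As $\kappa_1<\cdots<\kappa_M$, the $\xi_j$ are strictly decreasing with consecutive gaps $4(\kappa_{j+1}^2-\kappa_j^2)t$, so for $t$ large at most one index $j_0$ has $|\xi_{j_0}|$ bounded while every other satisfies $|\xi_j|\ge c\,t$; thus $c_j\to 0$ exponentially for the slower solitons $j<j_0$ and $c_j\to\infty$ exponentially for the faster ones $j>j_0$. In the system, the equations with $c_j\to 0$ force $A_j=O(e^{-Ct})$, and dividing the equations with $c_j\to\infty$ by $c_j$ turns them into $\sum_l A_l/(\kappa_j+\kappa_l)=1+O(e^{-Ct})$. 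Gaussian elimination of the fast indices from the equation for $A_{j_0}$ then produces, through the Cauchy structure, precisely the factor $\prod_{l>j_0}\big(\tfrac{\kappa_l-\kappa_{j_0}}{\kappa_l+\kappa_{j_0}}\big)^2$, reducing $A_{j_0}$ up to $O(e^{-Ct})$ to the single-soliton form $A_{j_0}=2\kappa_{j_0}\tilde c_{j_0}/(2\kappa_{j_0}+\tilde c_{j_0})$ with renormalized $\tilde c_{j_0}=\gamma_{j_0}^2\prod_{l>j_0}\big(\tfrac{\kappa_l-\kappa_{j_0}}{\kappa_l+\kappa_{j_0}}\big)^2 e^{-2\kappa_{j_0}\xi_{j_0}}$. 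Since $H=A_{j_0}+O(e^{-Ct})$, a direct differentiation (which I would check exactly in the one-soliton case) gives $\partial_x H=-\kappa_{j_0}^2/\cosh^2(\kappa_{j_0}x-4\kappa_{j_0}^3 t-p_{j_0})+O(e^{-Ct})$ with $p_{j_0}$ as in (\ref{sic}); in the complementary regions no index is active, every $c_j$ is exponentially large or small, and $\partial_x H=O(e^{-Ct})$. Rewriting the single active contribution as the full sum $-\sum_j\kappa_j^2/\cosh^2(\cdots)$, the inactive terms being themselves $O(e^{-Ct})$, yields (\ref{clas}).

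\textbf{Main obstacle.} The delicate part is this last step: producing the \emph{exact} phase shift $p_j$ --- in particular the interaction product over $l>j$ --- together with a genuinely exponential and \emph{uniform} error throughout the soliton region, including the transition layers around each soliton center and the gaps in between. I expect the bookkeeping of the Gaussian elimination on the Cauchy system, and the uniform bound $|\xi_j|\ge c\,t$ for the inactive indices that drives all the $O(e^{-Ct})$ estimates, to carry the technical weight.
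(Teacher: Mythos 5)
Your proposal is correct and follows essentially the same route as the paper: reduction via the symmetry and normalization conditions to the rational ansatz (\ref{Aneins})--(\ref{Anzwei}), translation of the residue conditions into the linear system $(Q+D)\mathbf{A}=\boldsymbol{1}$ with $Q_{jl}=1/(\kappa_j+\kappa_l)$ a positive definite Cauchy-type matrix and $D=\mathrm{diag}(1/c_j)$ nonnegative, uniform invertibility from $Q+D\succeq Q$, and boundedness of $\partial_x\mathbf{A}$ from $\partial_x c_j=-2\kappa_j c_j$ together with $D\mathbf{A}=\boldsymbol{1}-Q\mathbf{A}$; the treatment of $\mathbf{B}$ via $H=\sum_j A_j$ is also identical. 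The one place where you diverge is the asymptotic estimate (\ref{clas})--(\ref{sic}): the paper dispatches it in a single line as a classical result, citing Wadati and Toda, whereas you sketch a self-contained derivation (identification of the single active index $j_0$ via the ordering of $\xi_j=x-4\kappa_j^2t$, elimination of the exponentially degenerate equations, and the emergence of the interaction product $\prod_{l>j_0}\bigl(\tfrac{\kappa_l-\kappa_{j_0}}{\kappa_l+\kappa_{j_0}}\bigr)^2$ from Gaussian elimination on the Cauchy system). Your sketch is consistent --- the one-soliton reduction $A_{j_0}=2\kappa_{j_0}\tilde c_{j_0}/(2\kappa_{j_0}+\tilde c_{j_0})$ does differentiate exactly to $-\kappa_{j_0}^2/\cosh^2(\kappa_{j_0}x-4\kappa_{j_0}^3t-p_{j_0})$ --- and you correctly flag that the uniformity of the $\mathcal{O}(e^{-Ct})$ error across transition layers is where the real work lies; that work is precisely what the paper outsources to the classical reference, so carrying it out in full would make your proof longer but more self-contained than the paper's.
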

\begin{proof}
Ansatz (\ref{Aneins}) and (\ref{Anzwei}) follow from points $\mathbf{iii}$ and $\mathbf{iv}$ of the Model Riemann-Hilbert problem. The ``Residue conditions'' translate, concerning the vector $\bfA$, into a system of linear equations
\begin{align} \label{sist_orig}
M\cdot \bfA = \boldsymbol{1}; \quad\quad\quad\quad M = Q+D. 
\end{align}
Here $\boldsymbol{1}$ indicates the M-dimensional column vector whose entries are all one. The matrices $Q$ and $D$, respectively symmetric and diagonal, are given by
\begin{align}
Q_{jl} := \frac{1}{\kappa_j + \kappa_l}, \quad\quad D_{jl} :=  \frac{\delta_{jl}}{C_l};\quad\quad\quad\quad j,l=1,2,\ldots M.
\end{align}
The constants $C_j$'s are defined as follows:
\begin{align}
C_j := \gamma_j^2 e^{t\Phi\lp i\kappa_j \rp},\quad\quad l,j=1,2\ldots M.
\end{align}
They vary between zero and $+\infty$ for $x$ and $t$ real. The matrix $Q$ is easily proved to be positive definite. Consequently also $M$ is, for all $x$ and $t$ real, and the system (\ref{sist_orig}) has a unique solution. 
Now, by elementary calculations the inverse of $M$ is shown to be entry-wise bounded as the entries of the diagonal matrix $D$ vary between zero and $+\infty$. This proves that also $\bfA$ is bounded.
Differentiating (\ref{sist_orig}) w.r.t. $x$, one obtains 
\begin{align} \label{sist_der}
M\bfA_x = -D_x \bfA
\end{align}
where
\begin{align} \label{collegamento}
\lp D_x \rp_{lj} = \frac{2\kappa_j}{C_j}\delta_{lj} = 2\kappa_j D_{lj},\quad\quad l,j=1,2,\ldots,M.
\end{align}
A solution $\bfA_x$ for this system exists and is unique. Rewriting (\ref{sist_orig}) as 
\begin{align}
D\bfA = \boldsymbol{1} - Q\bfA
\end{align}
the right-hand side is evidently bounded. So, in view of (\ref{collegamento}), also $D_x\bfA$ is. This results then into boundedness for $\bfA_x$. The function $g\lp w \rp$ can also be treated similarly. The residue conditions yield in this case the system 
\begin{align}
M\mathbf{B} = \mathbf{V}.
\end{align} 
Here $M$ is defined as above and 
\begin{align}
V_j := -\kappa_j + H\lp x,t \rp,\quad\quad j=1,2,\ldots,M.
\end{align}
Now, from the ``Residue conditions'' and from (\ref{Aneins}) one has
\begin{align}
H\lp x,t \rp = A_1\lp x,t \rp + A_2\lp x,t \rp +\ldots + A_M\lp x,t \rp
\end{align}
so that both $\mathbf{V}$ and its derivative w.r.t. $x$ are bounded on the whole $(x,t)$-plane. The same is then proved for the vector $\mathbf{B}$, via arguments analogue to the ones above. Finally, the asymptotic estimate (\ref{clas}-\ref{sic}) is a classical result, already available in \cite{WadTod}.
\end{proof}

\subsection{The error vector $\bfe$}

We now wish to estimate the discrepancy between $\bfmt$ and the solution of the (matricial) model Riemann-Hilbert problem $M$. To this purpose, let us introduce the ``error vector''
\begin{align} \label{def_e}
\bfe(w) := \bfmt(w)\cdot\ls M(w) \rs^{-1}. 
\end{align}
We start with a characterization of its following directly from the meromorphic $\dbar$-problem for $\bfmt$. It consists in the following

\begin{SDBP}
Find a 2-dimensional vector-valued function $\bfe$ continuous on the whole complex plane and differentiable with continuity (as a function of two real variables) on $\bbC\backslash \bbR$, such that
\begin{description}
\item[\textbf{i\_}] ``$\dbar$-condition''. For all $z\in\bbC\backslash \bbR$ one has
\begin{align}
\dbar \bfe\lp w \rp = B\lp w \rp \bfe\lp w \rp
\end{align}
where
\begin{align} \label{Belgrado}
B(w)=\small
\lb\begin{array}{ll}
M(w)\,
\scriptsize{\lp \begin{array}{cc} 0 & 0 \\ - [\dbar R(w)]e^{t\Phi(w)} & 0 \end{array} \rp }\,
[M(w)]^{-1} & \mbox{ if } \imag(w)\geq 0,\\
M(w)\,\scriptsize{\lp \begin{array}{cc} 0 & [\dbar R(-w)]e^{-t\Phi(w)} \\ 0 & 0 \end{array} \rp}\, [M(w)]^{-1} & \mbox{ if } \imag(w)\leq 0.
\end{array}\right.\normalsize
\end{align}
\item[\textbf{iii}\_] ``Symmetry condition''. The vector-valued function $\bfe (w)$ is an even function 
\begin{align}\label{simm_e}
\bfe\lp -w \rp = \bfe\lp w \rp,\quad\quad\quad w\in \bbC
\end{align}
\item[\textbf{iv}\_] ``Normalization condition''.
\begin{align}
\lim_{\abs{w}\rightarrow +\infty} \bfe (w) = (1,0).
\end{align}
\end{description}
\end{SDBP}

\begin{remark}
From conditions (\ref{Resuno}-\ref{norm_matricial}) on deduces via simple arguments of complex analysis that 
\begin{align}
\det M\lp w \rp = 2iw.
\end{align}
From formula (\ref{def_e}), it is then not evident that $\bfe$ is smooth in the origin. This follows indeed from property (\ref{simm_e}) together with the observation that $\bfmt$ is differentiable in zero (see remark \ref{Remark_Claudia}).
\end{remark}

Let us now define the operator $\bbJ$ as follows
\begin{align} \label{Mostar}
\ls\bbJ(\bfe)\rs\lp w \rp := -\frac{1}{\pi} \iint_{\bbR^2} 	\frac{\bfe(s)B(s)}{w-s}\de A(s).
\end{align}
The smooth $\dbar$-problem above is easily shown to be equivalent to the following  integral equation
\begin{align} \label{Sarajevo}
\lp \textbf{Id} - \bbJ \rp \bfe = (1,0).
\end{align}
(See \cite{Circle} and \cite{Line} for further details). This is the final reformulation of the Riemann-Hilbert problem \ref{Cesare}, on which we will perform our analysis starting from the next section. 
 
\subsection{Analysis of the integral equation}

In this section we study existence and uniqueness of a solution for equation (\ref{Mostar}-\ref{Sarajevo}). 

\begin{theorem}\label{proposizione_principale}
There exists a constant $C$, depending on $C_0$, such that
\begin{align} \label{stima_inf}
\norm{\bbJ}_{\infty}\leq C\cdot t^{-N+\frac{1}{2}}
\end{align}
for all $t\geq 1$ and all $x\geq C_0 t$.
\end{theorem}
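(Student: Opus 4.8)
The plan is to bound the operator norm of $\bbJ$ on $L^\infty$ by estimating, uniformly in $w$, the integral
\begin{align}
\frac{1}{\pi}\iint_{\bbR^2} \frac{\abs*{B(s)}}{\abs*{w-s}}\de A(s),
\end{align}
since $\norm{\bbJ(\bfe)}_\infty \leq \norm{\bfe}_\infty \cdot \sup_w \tfrac{1}{\pi}\iint \abs*{B(s)}/\abs*{w-s}\,\de A(s)$. The first step is to control the pointwise size of the kernel $B(s)$. Because $M$ and $M^{-1}$ are entry-wise bounded on the support of $\dbar R$ (the strips $\Omega_1\cup\Omega_2$ and their reflections, where $\abs*{s}$ stays bounded away from the poles $\pm i\kappa_j$ by the choice of $\delta$), the conjugation by $M$ in (\ref{Belgrado}) only costs a constant factor. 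Hence $\abs*{B(s)} \leq C\,\abs*{\dbar R(s)}\,e^{t\,\mathrm{Re}\,\Phi(s)}$ on the upper half-plane, and symmetrically below. Invoking the key estimate (\ref{diseg_R}), namely $\abs*{\dbar R(s)} \leq C b^N$ with $s=a+ib$, reduces everything to
\begin{align}
\iint_{0\leq b\leq 2\delta} \frac{b^N\, e^{t\,\mathrm{Re}\,\Phi(s)}}{\abs*{w-s}}\,\de a\,\de b.
\end{align}

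The second step is to extract the exponential decay from $\mathrm{Re}\,\Phi$. Writing $\Phi(w)=8iw^3+2iw\,x/t$ with $w=u+iv$, a direct computation gives $\mathrm{Re}\,\Phi = -v\big(24u^2 - 8v^2 + 2x/t\big)$; using $x/t\geq C_0$ and the smallness of $\delta$ relative to $\sqrt{C_0}$, the bracket is bounded below by a positive multiple of $(u^2 + x/t)$, so that on the relevant strip $\mathrm{Re}\,\Phi(s) \leq -c\,b\,(a^2 + x/t)$ for some $c>0$. Thus $e^{t\,\mathrm{Re}\,\Phi} \leq e^{-c\,t\,b\,(a^2 + C_0)}$, giving genuine Gaussian-type decay in $a$ and exponential decay in $tb$.

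The third step is to carry out the two-dimensional integral. I would split $\tfrac{1}{\abs*{w-s}}$ via the standard Cauchy–Schwarz/Hölder trick used in the Miller–McLaughlin framework: bound the contribution of $1/\abs*{w-s}$ in $L^2$ against the $L^2$-in-$a$ decay of the Gaussian factor, leaving a one-dimensional integral in $b$. Performing the $a$-integration against $e^{-ctba^2}$ produces a factor $(tb)^{-1/2}$, after which the remaining $b$-integral is
\begin{align}
\int_0^{2\delta} b^N\, (tb)^{-1/2}\, e^{-c\,t\,b\,C_0}\,\de b.
\end{align}
Substituting $\tau = tb$ converts this into $t^{-N-1/2}\int_0^{\infty}\tau^{N-1/2}e^{-cC_0\tau}\,\de\tau$, and the integral is a finite Gamma-type constant. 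This yields the claimed bound $\norm{\bbJ}_\infty \leq C\,t^{-N+1/2}$ (the power being $-N-1 + 1/2 + 1$, where the extra $t^{+1}$ comes from the $L^2$-norm of the Cauchy kernel over the $a$-variable; the bookkeeping of these powers is exactly where care is required).

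The main obstacle I anticipate is precisely this bookkeeping of powers of $t$ together with the singularity of the Cauchy kernel $1/\abs*{w-s}$. Unlike a smooth kernel, $1/\abs*{w-s}$ is only locally $L^2$ (not $L^\infty$) near $s=w$, so the Hölder splitting must be arranged so that the factor carrying the $t$-decay (the Gaussian in $a$ and exponential in $b$) is paired against the non-integrable direction, while the kernel singularity is absorbed in a norm where it is finite. Getting the exponents consistent so that the $b^N$ from (\ref{diseg_R}) and the two $t^{-1/2}$ savings combine to exactly $t^{-N+1/2}$ — rather than an off-by-one power — is the delicate part, and I would double-check it by tracking the scaling $b\sim 1/t$ heuristically before committing to the Hölder estimate.
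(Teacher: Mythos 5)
Your overall strategy coincides with the paper's: bound $\norm{\bbJ}_{\infty}$ by $\sup_w \iint \abs*{B(s)}/\abs*{w-s}\,\de A(s)$, control $B$ pointwise via (\ref{diseg_R}) and $\mathrm{Re}\,\Phi(u+iv)\leq -v(24u^2+C_0)$ on the strips, split the Cauchy kernel from the Gaussian by Cauchy--Schwarz in $a$, and finish with the substitution $\tau=tb$. However, two steps are genuinely faulty. First, $M$ and $M^{-1}$ are \emph{not} entry-wise bounded on $\Omega_1\cup\Omega_2$: the second row of $M$ grows like $\abs*{w}$ as $u\to\pm\infty$ inside the strip, and, more seriously, $\det M(w)=2iw$ vanishes at the origin, which lies in $\Omega_1$, so $M^{-1}$ blows up like $1/\abs*{w}$ there. (Your remark about staying away from the poles $\pm i\kappa_j$ addresses neither issue.) The correct pointwise bound is the paper's (\ref{stima_insieme}), which carries the extra factor $(1+u^2)/\sqrt{u^2+v^2}$: the growth at infinity is then absorbed into the Gaussian $e^{-24tvu^2}$, and the origin singularity costs one power, $b^N\mapsto b^{N-1}$ --- harmless since $N\geq 1$, but it must be carried through your final integration.

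Second, the power counting does not close as written. If you pair the Gaussian against the Cauchy kernel by Cauchy--Schwarz in $a$, the Gaussian contributes its $L^2_a$-norm $\sim(tb)^{-1/4}$, not $(tb)^{-1/2}$, and the kernel contributes $\laa [(u-a)^2+(v-b)^2]^{-1/2}\raa_{L^2(\bbR,\de a)}=\sqrt{\pi/\abs*{v-b}}$, which is independent of $t$; there is no ``extra $t^{+1}$'' coming from it. Moreover $\int_0^{2\delta}b^N(tb)^{-1/2}e^{-cC_0tb}\,\de b=t^{-N-1}\int_0^{2\delta t}\tau^{N-1/2}e^{-cC_0\tau}\,\de\tau$, not $t^{-N-1/2}(\cdots)$. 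The true source of the exponent $-N+\frac{1}{2}$ is precisely the factor $\abs*{v-b}^{-1/2}$ that you discard: with the corrected kernel bound one is left with $\int_0^{\delta} b^{N-1}(tb)^{-1/4}\abs*{v-b}^{-1/2}e^{-C_0tb}\,\de b$, and rescaling $\tau=tb$ gives $t^{-N+\frac{1}{2}}\int_0^{\infty}\tau^{N-\frac{5}{4}}e^{-C_0\tau}\abs*{\tau-tv}^{-1/2}\,\de\tau$, whose integral is bounded uniformly in $v\geq 0$ exactly because $N\geq 1$. Your final answer matches (\ref{stima_inf}) only through compensating errors; to turn the sketch into a proof you need the corrected bound on $B$ and an honest treatment of the $\abs*{v-b}^{-1/2}$ singularity in the $b$-integral.
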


\begin{proof}
Fix such $t$ and $x$ and let $\bfe$ belong to $\lird$. By elementary algebraic manipulations one obtains
\begin{align}
\norm*{\bbJ\lp \bfe \rp}_{\infty} \leq \frac{2\norm{\bfe}_{\infty}}{\pi}\cdot\norm*{\iint_{\bbR^2}\frac{\max_{ij}\abs*{B_{ij}(s)}}{w-s}\de A\lp s \rp}_{\infty}
\end{align}
On $\Omega_3$ the matrix $B$ vanishes, because $\dbar R$ does. On $\Omega_1\cup \Omega_2$ one has
\begin{align}
\abs*{e^{t\Phi\lp u+iv \rp}} \leq e^{-24tvu^2 - C_0tv}.
\end{align}
In view of this last one, of (\ref{diseg_R}) and of (\ref{norm_matricial}) one also has
\begin{align}\label{stima_insieme}
\max_{ij}\abs*{B_{ij}\lp w \rp} \leq C_1\frac{v^N\lp 1+u^2 \rp}{\sqrt{u^2 + v^2}} e^{-24tvu^2 - C_0tv},\quad\quad \quad w\in\Omega_1\cup\Omega_2.
\end{align}
On $\Omega_2$ this last one further simplifies to 
\begin{align}
\max_{ij}\abs*{B_{ij}\lp w \rp} \leq C_2 e^{-24t\delta u^2 - C_0t\delta},\quad\quad\quad w\in \Omega_2.
\end{align}
It follows that
\begin{align}
\iint_{\Omega_2}\frac{\max_{ij}{\abs*{B_{ij}\lp s \rp}}}{\abs*{w-s}} \de A\lp s \rp  & \leq
C_2\cdot e^{-C_0t\delta}\int_\delta^{2\delta}\de b\int_{-\infty}^{+\infty} \frac{e^{-24 t\delta a^2}}{\sqrt{\lp a-u \rp ^2 + \lp b-v \rp^2}} \de a  \\
& \leq C_2\cdot e^{-C_0t\delta}\int_{\delta}^{2\delta} \norm*{e^{- 24 t\delta a^2}}_{L^2\lp \bbR,\de a \rp} \cdot\norm*{\ls \lp u-a \rp^2 + \lp v-b \rp^2 \rs^{-\frac{1}{2}}}_{\ldru} \de b\\
&\leq \frac{C_2}{2} \cdot\sqrt[4]{\frac{\pi^3}{3\delta}}e^{-C_0\delta t}\int_{\delta}^{2\delta} \frac{\de b}{\sqrt{\abs*{b-v}}}\\
& \leq C_3 \cdot\sqrt[4]{\delta}\, e^{-C_0\delta t}.
\end{align}
Let us now consider the region $\Omega_1$. In view of (\ref{stima_insieme}) one has
\begin{align}
\iint_{\Omega_1}\frac{\max_{ij}{\abs*{B_{ij}\lp s \rp}}}{\abs*{w-s}}\de A\lp s \rp &
\leq C_1\cdot  \int_0^{\delta}\de b \int_{-\infty}^{\infty} \frac{b^N\lp 1 + a^2 \rp}{\sqrt{a^2 + b^2} \sqrt{\lp u-a \rp^2 + \lp v-b \rp^2}  }e^{-24tba^2 - C_0tb} \de a\\ 
\nonumber &\leq C_4 \cdot \int_0^{\delta} b^{N-1} e^{-C_0tb} \de b \int_{-\infty}^{+\infty} \frac{e^{-24tba^2}}{\sqrt{\lp u-a \rp^2 + \lp v-b \rp^2}}\de a\\
&\leq C_4 \cdot \int_0^{\delta} b^{N-1} e^{-C_0 t b}\norm*{e^{-24tba^2}}_{L^2\lp \bbR, \de a \rp} \cdot \norm*{\ls \lp u-a \rp^2 + \lp v-b \rp^2 \rs^{-\frac{1}{2}}}_{\ldru}\nonumber\\
& \leq C_5 \cdot \int_0^{\delta} \frac{b^{N-1} e ^{-C_0 tb}}{\sqrt[4]{tb}\sqrt{\abs*{b-v}}} \de b \nonumber \\
%
%
&\leq \frac{C_6}{t^{N-\frac{1}{2}}}.
\end{align}
The regions $\Omega_1^{\prime}, \Omega_2^{\prime}$ and $\Omega_3^{\prime}$ are treated analogously. This completes the proof. 
\end{proof}
A direct consequence of the analysis above is the following, fundamental

\begin{corollary}\label{corollario_sano}
The integral equation (\ref{Mostar}-\ref{Sarajevo}) has a unique solution in $\lird$, whenever $t$ is sufficiently large and $x$ is greater or equal than $C_0 t $. Moreover, for such solution $\bfe = \bfe (w;x,t)$, one has 
\begin{align}
\norm*{\bfe\lp w;x,t \rp}_{\infty} =  (1,0) + \mathcal{O}\lp t^{-N+\frac{1}{2}} \rp,\quad \quad\quad t\rightarrow + \infty;
\end{align} 
uniformly with respect to $x\geq C_0t$.
\end{corollary}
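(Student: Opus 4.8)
The plan is to derive both assertions directly from the operator-norm estimate of Theorem \ref{proposizione_principale} by a standard Neumann-series (equivalently, Banach fixed-point) argument. First I would observe that $\bbJ$, as defined in (\ref{Mostar}), is a bounded linear operator on $\lird$, and that the computation carried out in the proof of Theorem \ref{proposizione_principale} in fact controls its operator norm: the very first displayed inequality there reads $\norm*{\bbJ\lp \bfe \rp}_{\infty} \leq \tfrac{2}{\pi}\norm{\bfe}_\infty \cdot (\text{integral bound})$, and the integral bound is shown to be $\mathcal{O}\lp t^{-N+\frac12} \rp$. By homogeneity this gives $\norm*{\bbJ\lp \bfe \rp}_{\infty} \leq C\, t^{-N+\frac12}\norm{\bfe}_\infty$ for every $\bfe\in\lird$, i.e. exactly (\ref{stima_inf}) read as a bound on $\norm{\bbJ}_\infty$. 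Since $N\geq 1$, the exponent $-N+\tfrac12$ is strictly negative, so $\norm{\bbJ}_\infty\to 0$ as $t\to+\infty$, uniformly in $x\geq C_0 t$ because $C$ depends only on $C_0$. In particular there is a threshold $t_0=t_0(C_0)$ beyond which $\norm{\bbJ}_\infty \leq \tfrac12 <1$.

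For $t\geq t_0$ and $x\geq C_0 t$ the operator $\textbf{Id}-\bbJ$ is then invertible on $\lird$, with inverse given by the convergent Neumann series $(\textbf{Id}-\bbJ)^{-1}=\sum_{k=0}^{\infty}\bbJ^{k}$ and $\norm*{(\textbf{Id}-\bbJ)^{-1}}_\infty \leq (1-\norm{\bbJ}_\infty)^{-1}\leq 2$. This yields at once existence and uniqueness of the solution $\bfe=(\textbf{Id}-\bbJ)^{-1}(1,0)$ of the integral equation (\ref{Sarajevo}) in $\lird$, which is the first claim of the corollary.

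For the quantitative estimate I would write $\bfe-(1,0)=\big[(\textbf{Id}-\bbJ)^{-1}-\textbf{Id}\big](1,0)=\bbJ\,(\textbf{Id}-\bbJ)^{-1}(1,0)$, whence $\norm*{\bfe-(1,0)}_\infty \leq \norm{\bbJ}_\infty\,\norm*{(\textbf{Id}-\bbJ)^{-1}}_\infty\,\norm{(1,0)}_\infty \leq 2C\,t^{-N+\frac12}$. Since $t_0$, $C$ and the constant $2$ all depend only on $C_0$, this is precisely the asserted $\mathcal{O}\lp t^{-N+\frac12} \rp$ behaviour of $\bfe-(1,0)$ in the sup-norm, uniform with respect to $x\geq C_0 t$.

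The only genuine labour is the integral estimate behind Theorem \ref{proposizione_principale}, which is assumed here; granted that bound, the corollary is a routine contraction argument and presents no real obstacle. The single point deserving a line of care is the passage from the pointwise-looking inequality (\ref{stima_inf}) to a bound on the operator norm of $\bbJ$, but this is immediate from the homogeneity of the estimate $\norm*{\bbJ\lp \bfe \rp}_{\infty}\leq C\,t^{-N+\frac12}\norm{\bfe}_\infty$ noted above, so that the Neumann series indeed converges in $\lird$ for $t$ large.
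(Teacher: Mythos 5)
Your argument is exactly the paper's: the paper's proof of this corollary consists of one line invoking (\ref{stima_inf}) to invert $\textbf{Id}-\bbJ$ by Neumann series, and your proposal simply spells out that contraction argument (including the identity $\bfe-(1,0)=\bbJ(\textbf{Id}-\bbJ)^{-1}(1,0)$ for the quantitative part). The proposal is correct and takes essentially the same approach.
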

\begin{proof}
In view of (\ref{stima_inf}), one can invert the operator $\mathbb{I}-\bbJ$ by means of Neumann series. This easily yields both parts of the thesis.
\end{proof}

This last result guarantees that the integral equation (\ref{Mostar}-\ref{Sarajevo})  is an equivalent characterization of the ''error vector'' $\bfe$ introduced in (\ref{def_e}). Such an equivalence will be exploited in order to extract as much explicit information as possible about its asymptotic behavior.

\subsection{Long-time asymptotics for the solution $q(x,t)$} \label{Nuova_ossessione}

From the integral equation (\ref{Mostar} - \ref{Sarajevo}) it is easy to deduce that the error vector $\bfe(w)$ has the following asymptotic expansion 
\begin{align}\label{Drina}
\bfe(w;x,t) = (1,0) + \frac{\bfe^{(2)}(x,t)}{w^2} + o\lp\frac{1}{w^2}\rp, \quad\quad w\rightarrow i\infty
\end{align}
where
\begin{align}\label{Sila}
\bfe^{(2)}(x,t) = -\frac{2}{\pi} \iint_{\Omega_1\cup\Omega_2} s\cdot\bfe(s;x,t)\cdot B(s;x,t)\,\de A(s)
\end{align}
Here $w$ is understood to approach infinity along the imaginary axis. In this regime, the nonalaytic vector $\bfmt(w;x,t)$ coincides with $\bfm(w;x,t)$. So (\ref{def_e}) yields
\begin{align} \label{Costarica}
\bfm(w;x,t) = \bfe(w;x,t) \cdot M(w;x,t), \quad\quad w\rightarrow i\infty.
\end{align}
Plugging (\ref{norm_matricial}) and (\ref{Drina}) into (\ref{Costarica}) gives
\begin{align}
m_1^{(1)}(x,t) = iH(x,t) - i e_2^{(2)}(x,t)
\end{align}
for the first component of the vector $\bfm^{(1)}(x,t)$ defined in (\ref{Tisa}). Plugging this last identity into (\ref{Sava}), one obtains
\begin{align} \label{Knopf}
q(x,t) = 2\ddx H(x,t) - 2 \ddx e^{(2)}_2(x,t).
\end{align}
Comparing with (\ref{Nuova_sol}) and recalling (\ref{clas}) yields 
\begin{align}\label{Ciliegio}
\mathcal{E}\lp x,t \rp = -2 \ddx e_2^{(2)}\lp x,t \rp + \mathcal{O}\lp e^{-Ct} \rp
\end{align}
for some constant $C>0$. Estimating the magnitude of this quantity will complete the proof of theorem \ref{Hauptsatz}. To that purpose, we will need the following

\begin{lemma}\label{Ionio}
The solution $\bfe(w;x,t)$ of the integral equation (\ref{Mostar}-\ref{Sarajevo}) is differentiable with respect to $x$ for all positive $x$ and all sufficiently large $t$. There exists a constant $C$, dependent on $C_0$, such that 
\begin{align}\label{Gemona}
\norm*{\ddx \bfe(w;x,t)}_{\infty}\leq C \, t^{-N+\frac{1}{2}} 
\end{align}
for all sufficiently large $t$ and all $x \geq C_0 t$. The $L^{\infty}$-norm above is understood to be computed w.r.t. the complex variable $w$.
\end{lemma}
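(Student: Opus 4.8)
The plan is to differentiate the integral equation (\ref{Sarajevo}) directly in $x$ and to reduce the statement to a pointwise estimate on the $x$-derivative of the kernel $B$ of (\ref{Belgrado}). Writing $\bbJ=\bbJ(x)$ to stress the dependence through $B(s;x,t)$ in (\ref{Mostar}), Theorem \ref{proposizione_principale} guarantees $\norm{\bbJ}_{\infty}\leq C\,t^{-N+\frac12}<1$ for $t$ large, so $\textbf{Id}-\bbJ$ is boundedly invertible on $\lird$ and $\bfe=(\textbf{Id}-\bbJ)^{-1}(1,0)$. Since the only $x$-dependence in (\ref{Belgrado}) sits in the factors $e^{\pm t\Phi(w)}$ and in the model matrix $M$ (the coefficient $\dbar R$ being $x$-independent by (\ref{def_R})), the map $x\mapsto B(\cdot;x,t)$ is continuously differentiable and the resulting kernels are uniformly dominated by a fixed $\frac{1}{\abs{w-s}}$-integrable majorant, as checked below. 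Hence $x\mapsto\bbJ(x)$ is differentiable as a bounded-operator-valued map, with derivative $\partial_x\bbJ$ obtained by replacing $B$ with $\ddx B$ in (\ref{Mostar}); as operator inversion is smooth, $\bfe$ is differentiable in $x$ and, differentiating $(\textbf{Id}-\bbJ)\bfe=(1,0)$,
\begin{align}
\ddx\bfe=(\textbf{Id}-\bbJ)^{-1}\,(\partial_x\bbJ)\,\bfe .
\end{align}
Because $\norm{(\textbf{Id}-\bbJ)^{-1}}_{\infty}\leq 2$ and $\norm{\bfe}_{\infty}=\mathcal{O}(1)$ by Corollary \ref{corollario_sano}, it suffices to prove $\norm{\partial_x\bbJ}_{\infty}\leq C\,t^{-N+\frac12}$.

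The heart of the matter is the size of $\ddx B$. From (\ref{def_Phi}) one has $\ddx\Phi(w)=\tfrac{2\imath w}{t}$, hence the crucial identity $\ddx e^{t\Phi(w)}=2\imath w\,e^{t\Phi(w)}$, which brings down a factor $\abs{2\imath w}=2\sqrt{u^2+v^2}$. Differentiating instead the entries of $M$ and of $M^{-1}=\tfrac{1}{2\imath w}\,\mathrm{adj}\,M$ produces, by Proposition \ref{Roseto}, only bounded $x$-derivatives of $A_j$, $B_j$, $H$, so these contributions are of the same order as $B$ itself or smaller. Retracing the derivation of (\ref{stima_insieme}) with this single modification, I expect the bound
\begin{align}\label{plan_dxB}
\max_{ij}\abs{\big(\ddx B\big)_{ij}(w)}\leq C\,v^{N}\big(1+u^2\big)\,e^{-24tvu^2-C_0tv},\qquad w\in\Omega_1\cup\Omega_2,
\end{align}
that is, precisely the right-hand side of (\ref{stima_insieme}) multiplied by $\sqrt{u^2+v^2}$.

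It then remains to feed (\ref{plan_dxB}) into the integral and repeat, essentially verbatim, the estimates of Theorem \ref{proposizione_principale}. On $\Omega_2$ and $\Omega_2^{\prime}$ the imaginary part $b$ is bounded below by $\delta$, so the Gaussian in $a$ together with $e^{-C_0tb}$ still yields a bound $\mathcal{O}(e^{-C_0\delta t})$, negligible against $t^{-N+\frac12}$. On $\Omega_1$ (and its reflection) the extra factor $\sqrt{u^2+v^2}$ cancels the $\frac{1}{\sqrt{u^2+v^2}}$ appearing in (\ref{stima_insieme}); applying Cauchy--Schwarz in $a$ with $\norm{e^{-24tba^2}}_{\ldru}\sim(tb)^{-1/4}$ and $\norm{[(u-a)^2+(v-b)^2]^{-1/2}}_{\ldru}\sim\abs{v-b}^{-1/2}$, then rescaling $b=\tau/t$ and taking the supremum over the observation point $w$ (attained for $v\sim t^{-1}$, which supplies an extra $t^{1/2}$), should reproduce $\mathcal{O}(t^{-N+\frac12})$. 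Combined with the displayed formula for $\ddx\bfe$, this yields (\ref{Gemona}).

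The step I expect to demand the most care is this last one. Unlike in Theorem \ref{proposizione_principale}, the factor $\tfrac{1}{\sqrt{a^2+b^2}}$ that previously lowered $b^{N}$ to $b^{N-1}$ is now absent, and the surviving $(1+a^2)$ is no longer integrable in $a$ when the Gaussian degenerates as $b\to0$. One must verify that the prefactor $b^{N}$, with $N\geq1$, compensates: after Cauchy--Schwarz the $b$-integrand behaves like $b^{N}\big[(tb)^{-1/4}+(tb)^{-5/4}\big]e^{-C_0tb}\abs{v-b}^{-1/2}$, and upon rescaling $b=\tau/t$ both terms are integrable near $\tau=0$ precisely because $N\geq1$, each contributing $\mathcal{O}(t^{-N+\frac12})$ after the supremum over $v$. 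Carrying out this bookkeeping carefully, together with the analogous but easier control of the contributions coming from differentiating $M$ and $M^{-1}$, is where the real work lies.
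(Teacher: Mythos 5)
Your proposal is correct and follows essentially the same route as the paper: there, one differentiates the Neumann series $\bfe=\sum_n\bbJ^n(1,0)$ term by term via the recursion $\partial_x\bbJ^n(1,0)=\bbJ_x\bigl[\bbJ^{n-1}(1,0)\bigr]+\bbJ\bigl[\partial_x\bbJ^{n-1}(1,0)\bigr]$, obtains $\norm*{\partial_x\bbJ^n(1,0)}_\infty\leq n\,(Ct^{-N+\frac12})^n$ and sums — which is just the termwise form of your resolvent identity $\partial_x\bfe=(\textbf{Id}-\bbJ)^{-1}(\partial_x\bbJ)\bfe$ — and in both arguments everything reduces to $\norm{\bbJ_x}_\infty\leq Ct^{-N+\frac12}$, established by repeating the estimates of Theorem \ref{proposizione_principale} with $B$ replaced by $\partial_x B$. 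One small correction: your displayed bound $Cv^N(1+u^2)e^{-24tvu^2-C_0tv}$ for $\partial_x B$ covers only the product-rule term in which $\partial_x$ falls on $e^{t\Phi}$; the terms in which it falls on $M$ or $M^{-1}$ retain the factor $(u^2+v^2)^{-1/2}$ and hence obey the original bound (\ref{stima_insieme}) rather than your new one — these are in fact the dominant contributions, reproducing $t^{-N+\frac12}$ exactly as in Theorem \ref{proposizione_principale}, while the non-singular term you isolate contributes only $\mathcal{O}(t^{-N-\frac12})$, so the final estimate (\ref{Gemona}) is unaffected.
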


\begin{proof}[Proof of lemma \ref{Ionio}.]
Solving the integral equation (\ref{Mostar}-\ref{Sarajevo}) by Neumann series, one obtains
\begin{align} \label{Fleons}
\bfe(w;x,t) = \sum_{n=0}^{+\infty}\ls \bbJ^{n}(1,0) \rs (w;x,t).
\end{align}
For each term of the series on the right hand side, the recursive formula 
\begin{align} \label{Fortuna}
\ddx\ls \bbJ^n(1,0) \rs (w;x,t) = \bbJ_x\lb \bbJ^{n-1}\ls (1,0) \rs (w;x,t) \rb + \bbJ\lb \ddx \bbJ^{n-1}\ls (1,0) \rs (w;x,t) \rb, && n\geq 1, 
\end{align}
holds, where
\begin{align}
\ls\bbJ_x \bfe(s)\rs (w) := -\frac{1}{\pi}\iint_{\bbR^2}\frac{\ls\ddx B(s;x,t)\rs \bfe(s)}{w-s} \de A(s) && \bfe \in \lird.
\end{align}
By analogue calculations as in the proof of theorem \ref{proposizione_principale}, (\ref{Fortuna}) yields the estimate
\begin{align}
\norm*{\ddx \ls \bbJ^n (1,0) \rs (w;x,t)}_{\infty} \leq n\cdot \lp \frac{C}{t^{N-\frac{1}{2}}} \rp ^ n, && x\geq C_0 t
\end{align}
valid for some constant $C$, all positive integers and all $t$ sufficiently large. As a consequence,
\begin{align}
\sum_{n=1}^{\infty}\ddx \ls\bbJ^n(1,0)\rs(w;x,t)
\end{align}
converges uniformly in this $(x,t)$-region, and coincides with the derivative with respect to $x$ of the right-hand side of (\ref{Fleons}). This gives differentiability of the left-hand side and estimate (\ref{Gemona}).
\end{proof}

\begin{proof}[Proof of theorem \ref{Hauptsatz}]
The main point here is to control the first term in the right-hand side of (\ref{Ciliegio}). Using (\ref{Sila}), we get for this one the expression 
\begin{align}
-2\ddx e_2^{(2)}(x,t) = \frac{4}{\pi}\ddx \iint_{\Omega_1 \cup \Omega_2 } s\ls  
e_1\sxt B_{12}\sxt + e_2\sxt B_{22}\sxt
\rs \de A(s).
\end{align}
Taking the derivative under the sign of integral and applying lemma \ref{Ionio} and corollary \ref{corollario_sano} one obtains
 \begin{align}\label{Celoria}
 -2\ddx e_2^{(2)}(x,t)  =  \frac{4}{\pi}\iint_{\Omega_1 \cup \Omega_2} s\cdot \ddx B_{12}\sxt \de A(s) + \mathcal{O}(t^{-N+\frac{1}{2}})\cdot I_1
 \end{align}
 where
 \begin{align}
I_1 :=  \iint_{\Omega_1\cup\Omega_2} s \, \lb B_{12}\sxt + B_{22}\sxt + \ddx \ls B_{12}\sxt + B_{22}\sxt \rs\rb \de A(s).
 \end{align}
 Now, in view of lemma \ref{Roseto}, one can determine a positive constant $C_1$ such that
 \begin{align} \label{Banda_semplice}
 \abs*{w B_{1j}(w)}, \abs*{w \ddx B_{1j}(w)} \leq C_1 v^{N}(1+u^2) e^{-24tvu^2-C_0tv}, \quad\quad w\in\Omega_1 \cup \Omega_2;
\end{align}
for $j=1,2$. 
 So that
\begin{align}
I_1 &\leq C_1 \int_0^{2\delta} \de b \int_{-\infty}^{+\infty} b^N\,\lp1 + a^2\rp\, e^{-24tba^2 - C_0tb} \de a \\
& = C_1 \int_0^{2\delta} b^N e^{-C_0 t b} \ls \int_{-\infty}^{+\infty} \lp 1+a^2 \rp e^{-24tba^2} \de a \rs \de b\\
& \leq C_2 \int_{0}^{2\delta} b^N\ls \frac{1}{\sqrt{tb}} + \frac{1}{\lp tb \rp^{\frac{3}{2}}} \rs e^{-C_0 tb} \de b\\
&  = \frac{C_2}{t^{N+1}}\int_0^{+\infty} \bt^N \lp \frac{1}{\sqrt{\bt}} + \frac{1}{\bt^{\frac{3}{2}}} \rp e^{-C_0 \bt} \de \bt\,\, \leq \,\, C_3 t^{-N-1}
\end{align}
Substiting in (\ref{Celoria}) one obtains
 \begin{align}\label{Saldini}
 -2\ddx e_2^{(2)}(x,t)  =  \frac{4}{\pi}\iint_{\Omega_1 \cup \Omega_2} s\cdot \ddx B_{12}\sxt \de A(s) + \mathcal{O}(t^{-2N-\frac{1}{2}})
 \end{align}
 where the asymptotic estimate is to be understood as uniform with respect to $x$ greater or equal than $C_0 t$. Again by means of (\ref{Banda_semplice}) one easily determines a positive constant $C_4$ such that 
 \begin{align}\label{Cile}
 \abs*{\iint_{\Omega_2} s\cdot\ddx B_{12}(s) \de A(s) } \leq e^{-C_4 t},
 \end{align}
 for all $t$ suffciently large and $x$ greater or equal than $C_0 t$. We then turn to analyse, in this same  $(x,t)$-regime,
 \begin{align}\label{Brasile}
 I_2 := \iint_{\Omega_1} s\cdot\ddx B_{12}(s) \de A\lp s \rp.
 \end{align}
An explicit expression for the integrand above is provided by
\begin{align}
s\cdot \ddx B_{12}(s) = \frac{i^{N}}{2N!}r^{(N+1)}(a)\cdot b^N\cdot F(s;x,t)\cdot e^{t\Phi(s)}
\end{align}
where 
\begin{align}
F(s;x,t) = -i f(-s;x,t)\cdot\ddx f(-s;x,t) + s\cdot f(-s;x,t)^2.
\end{align}
and the function $f(w;x,t)$ was defined in (\ref{Aneins}). One can then rewrite (\ref{Brasile}) as follows
\begin{align}\label{Tricomi}
I_2 = \frac{i^N}{2N!}\int_0^{\delta}b^N\de b \int_{-\infty} ^{+\infty}r^{(N+1)}(a)\cdot F(s;x,t)\cdot e^{t\Phi (s)}\de a.
\end{align}
Put 
\begin{align}\label{Kolmogorov}
I_3 := \int_{-\infty}^{\infty}r^{(N+1)}(a)\cdot F(s;x,t)\cdot e^{t\Phi(s)}\de a.
\end{align}
From the original definition (\ref{def_Phi}) of $\Phi$, separating real and imaginary parts,
\begin{align}
\Phi(a+ib) = -24 ba^2 - 2b\lp \xovt - b^2 \rp + 2ia\ls 4a^2 + \lp \xovt - 12b^2 \rp \rs.
\end{align}
In view of our assumptions on the reflection coefficient $r$, there exists a function $\check{r}\in L^1(\bbR)$ such that 
\begin{align}
r^{(N+1)}(u) = \int_{-\infty}^{+\infty}\check{r}(\rho) e^{-i\rho u }\de \rho,\quad\quad\quad u\in \bbR.
\end{align}
Substituting in (\ref{Kolmogorov}) and using Fubini's Theorem, one obtains then 
\begin{align}\label{Cicerone}
I_3 = e^{-2tb\lp \xovt - 4b^2 \rp}\int_{-\infty}^{+\infty}\check{r}(\rho)\lb \int_{-\infty}^{+\infty}  
 \ls F(s;x,t)e^{-24tba^2} \rs e^{ 2ita\ls 4a^2 + \lp \xovt - 12 b^2 \rp - \frac{\rho}{2t} \rs} \de a  \rb \de \rho
\end{align}
Put
\begin{align}
I_4 =  \int_{-\infty}^{+\infty}  
 \ls F(s;x,t)e^{-24tba^2} \rs e^{ 2ita\ls 4a^2 + \lp \xovt - 12 b^2 \rp - \frac{\rho}{2t} \rs} \de a. 
\end{align}
To study this integral we use the van der Corput lemma (see \cite{Ste}, pp 334). We obtain in this way that there exists a (universal) constant $C_{VdC}$ such that 
\begin{align}\label{Bombieri}
\abs*{I_4}\leq C_{VdC}\cdot t^{-\frac{1}{3}}\lb \norm*{F(s;x,t)e^{-24tba^2}}_{L^{\infty}(\bbR,\de a)} + \norm*{ \dda\ls F(s;x,t)e^{-24tba^2} \rs}_{L^1(\bbR,\de a)}   \rb
\end{align} 
 From explicit expression (\ref{Aneins}) one deduces the existence of a constant $C_5$ such that
 \begin{align}
 \abs*{F\lp s;x,t \rp}\leq C_5\lp 1+\abs*{a} \rp, \quad \abs*{\dda F\lp s;x,t \rp}\leq C_5, \quad\quad s\in \Omega_1
 \end{align}
 uniformly for real $x$ and $t$. It immediately follows that 
 \begin{align}
 \norm*{F\lp s;x,t\rp e^{-24tba^2}}_{L^{\infty}\lp \bbR, \de a \rp} \leq C_5 \norm*{\lp 1 + \abs*{a} \rp e^{-24tba^2}}_{L^{\infty}\lp \bbR, \de a \rp}\leq C_{6}\lp 1 + \frac{1}{\sqrt{tb}} \rp
 \end{align}
 and that 
 \begin{align}
\norm*{ \ddu\ls F(s;x,t)e^{-24tba^2} \rs}_{L^1(\bbR,\de a)} & \leq C_{7}\ls \int_{-\infty}^{+\infty} e^{-24tba^2}\de a + \int_{-\infty}^{+\infty} tba^2 e^{-24tba^2} \de a \rs\\
& = C_{7}\ls \frac{1}{\sqrt{tb}}\int_{-\infty}^{\infty}e^{-24\ta^2}\de\ta + \frac{1}{\sqrt{tb}}\int_{-\infty}^{+\infty} \ta^2 e ^{-24\ta^2} \de \ta  \rs \,\leq \,\frac{C_{8}}{\sqrt{tb}}.
\end{align}
Via (\ref{Bombieri}) this yields 
\begin{align}
\abs*{I_4}\leq C_{9}\cdot t^{-\frac{1}{3}}\lp 1+\frac{1}{\sqrt{tb}} \rp.
\end{align}
 By elementary estimates in (\ref{Cicerone}), then,
 \begin{align}
 \abs*{I_3}\leq C_{10}\cdot t^{-\frac{1}{3}}\cdot\lp 1+ \frac{1}{\sqrt{tb}} \rp e^{-2tb\lp \frac{x}{t} - 4b^2 \rp} \leq C_{10}\cdot t^{-\frac{1}{3}}\cdot \lp 1 + \frac{1}{\sqrt{tb}} \rp e^{-C_0tb}.
 \end{align}
Here the constant $C_{10}$ is understood to depend also on the $L^1$-norm of $\check{r}$. This treatment of integral $I_3$ was inspired by \cite{EKT}, lemma 5.1. Finally, substituting according to this last one in (\ref{Tricomi}), one obtains
\begin{align}
\abs*{I_2} &\leq C_{11}\cdot t^{-\frac{1}{3}}\int_0^{\delta} b^{N} \lp 1 + \frac{1}{\sqrt{tb}} \rp e^{-C_0 t b}\de b\\
& \leq C_{11}\cdot  t^{-N-\frac{4}{3}}\int_0^{+\infty} \bt^N\lp 1 + \frac{1}{\sqrt{\bt}} \rp e^{-C_0\bt} \de \bt\, \leq \,C_{12}\cdot t^{-N-\frac{4}{3}}
\end{align}
 Plugging this inequality and (\ref{Cile}) into (\ref{Saldini}) gives the thesis.
 
\end{proof}

\section*{Acknowledgements}

Research partially supported by the Austrian Science Fund (FWF) under Grant No. Y330 and by the INdAM group for mathematical physics. The author wishes to thank Gerald Teschl, Ira Egorova and Spyros Kamvissis for valuable discussions.

\end{document}